\documentclass[preprint,12pt,authoryear]{elsarticle}
\usepackage[T1]{fontenc}
\usepackage{graphicx}
\usepackage{booktabs}
\usepackage{array}
\usepackage{amsmath}
\usepackage{amssymb}
\usepackage{tikz}
\usepackage{pgfplots}
\usetikzlibrary{arrows.meta,positioning,fit,calc,shapes.geometric}
\pgfplotsset{compat=1.18}
\usepackage{url}
\usepackage{hyperref}
\hypersetup{hypertexnames=false}
\usepackage{geometry}
\usepackage{listings}
\usepackage{xcolor}
\usepackage{algorithm}
\usepackage{algorithmic}
\usepackage{amsthm}

\newtheorem{definition}{Definition}
\newtheorem{proposition}{Proposition}

\lstdefinestyle{dslstyle}{
  basicstyle=\ttfamily\small,
  keywordstyle=\bfseries\color{blue!70!black},
  commentstyle=\itshape\color{gray},
  stringstyle=\color{red!70!black},
  breaklines=true,
  frame=single,
  captionpos=b,
  numbers=left,
  numberstyle=\tiny\color{gray},
  xleftmargin=1.5em,
  framexleftmargin=1.5em,
}

\journal{Journal of Systems and Software}

\begin{document}
\begin{frontmatter}
\title{AiFlow: Token-Native Reactive Orchestration with Bounded Backpressure for Streaming LLM Applications}
\author[1]{Qunhui Zhang}
\ead{will\_zhang@sjtu.edu.cn}

\address[1]{School of Software, Shanghai Jiao Tong University, Shanghai 200240, China}

\begin{abstract}
Large language model (LLM) applications increasingly operate as streaming workflows combining retrieval, tool calls, safety filters, and multi-agent coordination. Although contemporary frameworks expose provider deltas, workflow nodes often treat generation as coarse request--response steps, leaving queue management, worker allocation, ordering, and backpressure to ad hoc callback code. This paper presents \textsc{AiFlow}, a token-native reactive orchestration model that normalizes provider deltas into typed \texttt{Context<T>} events propagated through a directed streaming graph. Each node is managed by a \emph{Node Guardian} that declares and enforces local queue bounds, worker concurrency, ordering, overflow policy, cancellation propagation, and retry discipline. We formalize the bounded-memory property, present the compilation from a compact DSL and JSON graph form, and provide static validation for type safety, state concurrency, and injection compatibility. Controlled microbenchmarks, captured DeepSeek trace replay (30 runs), descriptive online runs, LangGraph baselines, a streaming RAG workload, and an Ollama local-backend check show that \textsc{AiFlow} does not alter provider-side Model~TTFT but reduces Application~TTFPT by 70.9--94.7\% versus aggregation and keeps runtime-owned queue depth within declared bounds (93.7--96.5\% MaxQ reduction versus unbounded policies). The supplementary artifact contains scripts, raw traces, machine-readable tables, checksums, and an API-free smoke test; the public implementation is available through the FIT Framework repository.
\end{abstract}
\begin{keyword}
large language model applications \sep streaming orchestration \sep reactive streams \sep bounded backpressure \sep dataflow runtime \sep workflow DSL \sep software engineering
\end{keyword}
\end{frontmatter}

\section*{Highlights}
\begin{itemize}
\item Token-native orchestration model where LLM deltas become typed graph events.
\item Node Guardian runtime with declarative queues, workers, ordering, and backpressure.
\item Formal bounded-memory proposition and static graph validation.
\item Separation of Model TTFT from Application TTFPT with paired trace-replay evaluation.
\item Captured-trace replay, DeepSeek/Ollama runs, and LangGraph baselines with effect sizes and confidence intervals.
\end{itemize}

%=======================================================================
\section{Introduction}
\label{sec:intro}
%=======================================================================

Streaming has become the default interaction mode for LLM applications~\citep{ref31}. Chat interfaces render partial answers before generation completes; retrieval-augmented generation (RAG) pipelines start checking citations while later chunks arrive; tool-calling agents inspect incremental arguments; safety filters delay or redact generated fragments; text-to-speech (TTS) systems route short text segments immediately; and users may interrupt a long answer while the workflow is still running. In these settings, \emph{printing provider tokens to a user interface} is not the same as \emph{making tokens first-class data inside the application graph}.

\subsection*{Motivating Example}

Consider a production streaming assistant that must classify each generated token as ``reasoning'' or ``answer,'' route reasoning tokens to a chain-of-thought logger, route answer tokens through a desensitization filter and then to TTS, and allow the user to interrupt mid-generation. With existing frameworks, a developer must: (1)~register a callback on the model stream, (2)~implement a thread-safe queue between the classifier and downstream branches, (3)~allocate a worker pool for the TTS node (whose service time is 5--10$\times$ longer than classification), (4)~add manual cancellation wiring when the user interrupts, and (5)~monitor queue growth manually to avoid memory exhaustion. If any of these mechanisms contain a bug, the system silently accumulates unbounded buffers, reorders tokens, or drops the interruption event. \textsc{AiFlow} replaces this ad hoc wiring with a single graph declaration in which each node's queue bound, worker count, ordering, overflow, and cancellation policy is specified at the graph level and enforced by the runtime.

\subsection*{Problem and Contributions}

Existing LLM application frameworks---LangChain, LangGraph, LlamaIndex, Semantic Kernel, DSPy, and AutoGen~\citep{ref23,ref24,ref26,ref27,ref19,ref41}---have made major progress in task composition, tool integration, state management, and agent collaboration. They also expose streaming outputs through callbacks, iterators, events, or state patches. However, the \emph{local execution policy} that connects streamed model deltas to downstream retrieval, tool, safety, TTS, memory, or observability nodes remains largely implemented in user code. Queue capacity, worker concurrency, ordering, cancellation, retry, and backpressure are hidden in callbacks and ad hoc asynchronous wiring, making the resulting application graph harder to validate, compare, and tune.

Reactive streams and dataflow systems provide mature concepts for asynchronous, potentially unbounded data exchange~\citep{ref13,ref12,ref33,ref30,ref34,ref2,ref4,ref10}: bounded queues, backpressure signaling, windowing, and demand management. \textsc{AiFlow} does not claim to invent these lower-level concurrency mechanisms. Its contribution is to \emph{specialize} them for the semantics and engineering requirements of LLM applications, where provider deltas, prompt fragments, retrieval results, tool events, TTS segments, safety labels, user interruptions, and control signals become typed graph events, and node-level resource and state-safety policies become declarative graph properties.

This paper makes four contributions:
\begin{enumerate}
\item A \textbf{token-native orchestration model} that represents LLM deltas and control signals as typed \texttt{Context<T>} events in a directed streaming graph (Section~\ref{sec:model}).
\item A \textbf{compact DSL and JSON graph form} whose compilation enables static checks for type safety, state concurrency, cycle legality, and injection compatibility (Section~\ref{sec:model}).
\item The \textbf{Node Guardian runtime abstraction} with a formal bounded-memory proposition for runtime-owned buffers (Section~\ref{sec:guardian}).
\item A \textbf{multi-level empirical evaluation} comprising controlled benchmarks, ablations, captured-provider trace replay, descriptive online/framework checks, a streaming RAG workload, and a local-backend sanity check, with explicit separation of Model~TTFT and Application~TTFPT (Section~\ref{sec:eval}).
\end{enumerate}

We deliberately narrow the performance claim. \textsc{AiFlow} is an application-layer orchestration model; it does not optimize transformer serving, KV-cache management, GPU batching, or provider-side scheduling. The evaluation accordingly distinguishes \emph{Model TTFT} (time until the first provider delta arrives) from \emph{Application TTFPT} (time until the first token has passed through a specified downstream application stage).

%=======================================================================
\section{Background and Problem Definition}
\label{sec:background}
%=======================================================================

A streaming LLM application graph consists of model, retrieval, tool, safety, memory, TTS, and output nodes connected by potentially asynchronous edges. The central problem is: \emph{how can token-level events be consumed continuously by downstream nodes without materializing every intermediate result as a complete response, while keeping node-local resources bounded, output ordering controllable, state access safe, and slow-consumer pressure observable and propagatable?}

\subsection{Formal Graph Model}

\begin{definition}[Streaming Application Graph]
A streaming LLM application is modeled as a directed graph $G = (N, E, \pi, \tau)$ where:
\begin{itemize}
\item $N$ is a finite set of operator nodes.
\item $E \subseteq N \times N$ is the set of directed edges.
\item $\pi: N \to \mathcal{P}$ maps each node to a policy tuple:
\[
\mathcal{P} = (q, k, \rho, \sigma, o, c, r).
\]
The tuple specifies queue capacity~$q$, worker count~$k$, ordering-buffer bound~$\rho$, state-safety mode~$\sigma$, overflow behavior~$o$, cancellation propagation~$c$, and retry policy~$r$, where:
\[
\begin{aligned}
\sigma &\in \{\text{stateless}, \text{partitioned}, \text{serialized}, \text{transactional}\},\\
o &\in \{\text{block}, \text{drop-newest}, \text{drop-oldest}, \text{error}\}.
\end{aligned}
\]
\item $\tau: E \to \mathcal{T}$ is a type map assigning a compatible event type to each edge.
\end{itemize}
\end{definition}

\begin{definition}[Context Event]
An event $e = (\mathit{payload}, t, \mathit{cid}, \mathit{src}, \mathit{seq}, \mathit{ctrl}, \mathit{meta})$ consists of a typed payload (token, chunk, tool fragment, retrieval result, safety label, or control signal), a timestamp~$t$, conversation identifier~$\mathit{cid}$, source node identifier~$\mathit{src}$, source-local sequence number~$\mathit{seq}$, optional control signal~$\mathit{ctrl}$, and metadata map~$\mathit{meta}$.
\end{definition}

\subsection{Research Questions}

The paper addresses seven research questions:
\begin{description}
\item[RQ1] Does token-level semantic routing reduce Application TTFPT after downstream processing?
\item[RQ2] Does node-level backpressure limit queue growth under slow consumers?
\item[RQ3] Do declarative worker policies improve batch completion under concurrent prompts?
\item[RQ4] How do routing, workers, queue capacity, ordering, and backpressure independently affect latency and memory?
\item[RQ5] Do conclusions hold under real provider chunk timing and online backends?
\item[RQ6] How does \textsc{AiFlow} compare with LangGraph streaming baselines?
\item[RQ7] Does a complete streaming RAG workflow exhibit the same application-layer behavior?
\end{description}

%=======================================================================
\section{Related Work}
\label{sec:related}
%=======================================================================

\subsection{LLM Application Frameworks and Orchestration}

LLM application research has evolved from isolated prompting toward tool use, retrieval, agents, and programmatic pipelines. The Transformer architecture~\citep{ref38} enabled large-scale pre-trained models~\citep{ref7} and instruction-tuned variants~\citep{ref32}. Chain-of-thought prompting~\citep{ref39} elicits multi-step reasoning; ReAct~\citep{ref42} combines reasoning with acting; Toolformer~\citep{ref35} learns tool use from language modeling signals; RAG~\citep{ref25} grounds generation in retrieved passages; comprehensive surveys~\citep{ref14} organize these advances; DSPy~\citep{ref19} compiles declarative language model calls into self-improving pipelines; and AutoGen~\citep{ref41} studies multi-agent conversation patterns.

Orchestration frameworks have emerged to manage these compositions. LangChain and LangGraph~\citep{ref23,ref24} provide chain abstractions, state graphs, and streaming callbacks. LlamaIndex~\citep{ref26} focuses on retrieval workflow composition. Semantic Kernel~\citep{ref27} integrates planning and orchestration within enterprise .NET/Python environments. These frameworks expose streaming tokens to application code via callbacks, iterators, or state events, but they do not provide \emph{graph-level declarations} for queue capacity, worker concurrency, ordering policies, backpressure behavior, and cancellation semantics. Developers must implement these properties manually in callback-local code, which hinders systematic validation, comparison, and operational observability.

A recent empirical study by Chen et al.~\citep{ref16} identifies reliability, latency, and debugging complexity as persistent challenges for LLM application developers. The lack of explicit orchestration policies at the workflow level contributes to these difficulties, particularly when multiple streaming operators interact through implicit shared state or unbounded buffering.

\subsection{Reactive Streams and Dataflow Systems}

Reactive programming originates in functional reactive animation~\citep{ref13} and data-flow architectures~\citep{ref12}. The Reactive Streams specification~\citep{ref33} standardizes asynchronous streams with non-blocking backpressure. Implementations such as RxJava~\citep{ref30}, Reactor, and Akka Streams~\citep{ref34} provide composable operators, schedulers, and bounded demand management. At the batch-processing level, MapReduce~\citep{ref11} introduced the paradigm of declarative parallel data transformations; building on similar principles, Apache Beam~\citep{ref4} and Kafka Streams~\citep{ref10} extend these ideas to distributed, unbounded data processing with windowing, triggers, and exactly-once semantics. The Dataflow model~\citep{ref2} formalizes correctness, latency, and cost trade-offs for massive-scale stream processing. Apache Spark Structured Streaming~\citep{ref5} provides a declarative API for real-time applications.

\textsc{AiFlow} builds on this lineage but specializes the event model, operator vocabulary, validation checks, injection protocol, and metrics to the semantics of LLM application graphs. The key distinction is not the underlying concurrency primitives (which are well-known), but the \emph{integration} of these primitives with LLM-specific event types, graph validation rules, side-effect declarations, token-level routing, and application-level observability.

\subsection{Workflow Engines and Model Serving}

Workflow engines such as Apache Airflow~\citep{ref3} and Temporal~\citep{ref37} focus on durable task orchestration with retries, timeouts, and operational visibility at the task granularity. Container orchestration platforms~\citep{ref8} manage deployment, scaling, and resource allocation for distributed services but operate at the infrastructure level rather than at intra-application data-flow granularity. These systems all operate at a coarser scheduling level than token-by-token streaming and do not expose intra-task backpressure or bounded queues.

LLM serving systems such as Orca~\citep{ref43} and vLLM~\citep{ref22} optimize transformer inference through continuous batching, PagedAttention memory management, and multi-tenant GPU scheduling. These systems are \emph{complementary} to application-layer orchestration: \textsc{AiFlow} consumes their streamed outputs and governs how those tokens move through application-layer nodes, without claiming to improve the serving system itself.

\subsection{Positioning Summary}

Table~\ref{tab:positioning} positions \textsc{AiFlow} relative to the categories above. The novelty lies not in inventing new concurrency primitives, but in providing a \emph{typed, validated, and observable orchestration layer} that makes token-level streaming behavior, resource policy, and bounded backpressure explicit properties of the LLM application graph rather than implicit callback-local implementation details.

\begin{table}[htbp]
\centering
\scriptsize
\caption{Positioning of \textsc{AiFlow} relative to related system categories.}
\label{tab:positioning}
\resizebox{\textwidth}{!}{%
\begin{tabular}{llllll}
\toprule
Category & Token-level events & Declared queue/worker policy & Static graph validation & Graph-declared backpressure & LLM-native DSL \\
\midrule
LLM frameworks (LangChain, LangGraph, etc.) & Partial (callbacks) & Callback-local & Limited & Limited & Partial \\
Reactive libraries (RxJava, Akka Streams) & Yes (generic) & Yes (generic) & Limited & Yes & No \\
Stream processors (Beam, Kafka Streams) & Yes (generic) & Yes (distributed) & Partial & Yes & No \\
Workflow engines (Airflow, Temporal) & No & Task-level & Partial & No & No \\
Model serving (vLLM, Orca) & No (internal) & Internal & Internal & Internal & No \\
\textsc{AiFlow} & Yes (LLM-typed) & Yes (node-level) & Yes & Yes & Yes \\
\bottomrule
\end{tabular}%
}
\end{table}

%=======================================================================
\section{Token-Native AiFlow Model}
\label{sec:model}
%=======================================================================

\textsc{AiFlow} provides both a fluent API and a JSON graph representation. Both forms compile into the same typed streaming graph (Figure~\ref{fig:architecture} shows the overall compilation and runtime architecture). Core operators include \texttt{source}, \texttt{prompt}, \texttt{generate}, \texttt{retrieve}, \texttt{map}, \texttt{reduce}, \texttt{conditions}, \texttt{delegate}, \texttt{split}, \texttt{join}, and \texttt{sink}. Resource policies can be attached locally to operators. This choice makes streaming behavior and resource control part of the graph definition rather than external callback wiring.

\subsection{DSL and Graph Compilation}
\label{sec:dsl}

Listing~\ref{lst:dsl} shows the fluent DSL for the motivating streaming semantic routing workflow introduced in Section~\ref{sec:intro}. The DSL compiles into a typed intermediate graph representation that undergoes validation before execution.

\begin{lstlisting}[style=dslstyle, caption={Fluent DSL for a semantic routing workflow with bounded backpressure.}, label={lst:dsl}, language=Java]
Flow<ConversationContext> flow = AiFlow.create()
    .source("user-input")
    .generate("llm", model("deepseek-chat"))
        .policy(Policy.queue(8).workers(1).order(PRESERVE))
    .map("classify", TokenClassifier::classify)
        .policy(Policy.queue(8).workers(1))
    .conditions("route")
        .when(ctx -> ctx.label().equals("reasoning"), "reason-branch")
        .when(ctx -> ctx.label().equals("answer"), "answer-branch")
    .node("reason-branch")
        .map("reasoning", ReasoningProcessor::process)
        .policy(Policy.queue(8).workers(2).order(PRESERVE))
    .node("answer-branch")
        .map("tts-prep", TtsSegmenter::segment)
        .policy(Policy.queue(8).workers(1).overflow(BLOCK)
                     .backpressure(PROPAGATE))
    .join("merge-output")
    .sink("output", OutputSink::emit);
\end{lstlisting}

The corresponding JSON graph form (suitable for configuration-driven or language-agnostic deployments) maps each node to a JSON object with type, policy, and edge declarations. The compiler verifies type compatibility on edges, validates policy consistency, and rejects illegal configurations before graph instantiation.

\subsection{Token-Level Routing Timeline}

Figure~\ref{fig:timeline} illustrates the token-level semantic routing enabled by the DSL graph above. Because classification overlaps with generation and routes each token immediately, the reasoning and answer branches can begin processing before the complete model response is available. This is the mechanism that reduces Application~TTFPT compared with aggregation baselines.

\begin{figure}[htbp]
\centering
\resizebox{0.96\textwidth}{!}{%
\begin{tikzpicture}[font=\small, tick/.style={gray!55, thin}, work/.style={line width=1.3pt}]
\draw[-Latex] (0,0) -- (12.8,0) node[right] {time};
\foreach \x/\lab in {0/0 ms,2/100 ms,4/200 ms,6/300 ms,8/400 ms,10/500 ms,12/600 ms} {
  \draw[tick] (\x,0.08) -- (\x,-4.6);
  \node[above] at (\x,0.08) {\lab};
}
\foreach \y/\lab in {-1/Generate,-2/Classify,-3/Reasoning branch,-4/Answer branch} {
  \node[left] at (-0.2,\y) {\lab};
}
\foreach \x/\c in {1.4/c1,3.2/c2,5.0/c3,7.0/c4,9.0/c5} {
  \draw[work] (\x,-1) -- ++(0.9,0) node[midway,above=2pt] {\c};
  \draw[work] (\x+0.15,-2) -- ++(0.65,0) node[midway,above=2pt] {\c};
}
\draw[work] (1.8,-3) -- (3.8,-3) node[midway,above=2pt] {c1 c2};
\draw[work] (5.3,-4) -- (9.5,-4) node[midway,above=2pt] {c3 c4 c5};
\end{tikzpicture}%
}
\caption{Token-level semantic routing timeline for the motivating workflow.}
\label{fig:timeline}
\end{figure}

Classification overlaps with generation, and downstream branches receive their first processed token before the model stream ends. This illustrates the mechanism behind the TTFPT improvement measured in RQ1 (Section~\ref{sec:results}).

\subsection{Node-Directed Injection}

A long-running graph can receive external input through two mechanisms:
\begin{itemize}
\item \texttt{offer(value)}: injects a new event at the graph source.
\item \texttt{offer(nodeId, value)}: delivers a typed event directly to a target node's input queue.
\end{itemize}
This supports user feedback, tool return values, retrieval updates, and external control events without rebuilding the flow. The compiler validates that the injected event type matches the target node's declared input type.

\subsection{Static Validation}
\label{sec:validation}

The compiler performs six categories of static checks before graph instantiation. Table~\ref{tab:validation} summarizes the validation rules and their compile-time responses.

\begin{table}[htbp]
\centering
\scriptsize
\caption{Compile-time validation rules.}
\label{tab:validation}
\resizebox{\textwidth}{!}{%
\begin{tabular}{lll}
\toprule
Error class & Compile-time response & Rationale \\
\midrule
Unknown node id in \texttt{offer(nodeId, value)} & Reject; report candidate node identifiers. & Prevents runtime dispatch failures. \\
Incompatible edge type & Reject unless an explicit adapter is declared. & Ensures type safety between operators. \\
Non-streaming operator on a streaming boundary & Require an explicit \texttt{reduce} or window boundary. & Prevents implicit materialization. \\
Cycle without feedback policy & Reject or require explicit feedback policy. & Avoids unbounded loop accumulation. \\
Shared-state operator with workers $> 1$ & Require stateless, partitioned, serialized, or transactional mode. & Ensures state-safety under concurrency. \\
Injection type mismatch & Reject before admission to target queue. & Prevents ClassCastException at runtime. \\
\bottomrule
\end{tabular}%
}
\end{table}

\begin{figure}[htbp]
\centering
\resizebox{\textwidth}{!}{%
\begin{tikzpicture}[
  font=\small,
  box/.style={draw, rounded corners=2pt, minimum width=2.6cm, minimum height=0.9cm, align=center},
  smallbox/.style={draw, rounded corners=2pt, minimum width=1.0cm, minimum height=0.65cm, align=center},
  runtime/.style={draw, dashed, rounded corners=2pt, inner sep=0.35cm},
  arrow/.style={-Latex, thick},
  inj/.style={-Latex, dashed, thick}
]
\node[box] (api) at (0,2.4) {Fluent API};
\node[box] (json) at (0,0.9) {JSON graph};
\node[box, minimum width=3.0cm] (compiler) at (3.4,1.65) {DSL compiler\\and verifier};
\node[box, minimum width=3.0cm] (typed) at (6.8,1.65) {Typed streaming\\graph};
\draw[arrow] (api) -- (compiler);
\draw[arrow] (json) -- (compiler);
\draw[arrow] (compiler) -- (typed);

\node[box] (model) at (0,-1.5) {Model\\stream};
\node[box] (events) at (0,-3.5) {User/tool\\events};
\node[box] (gen) at (3.0,-1.5) {Generate\\Guardian};
\node[box] (route) at (5.9,-1.5) {Map/route\\Guardian};
\node[box] (tts) at (8.8,-1.5) {TTS/tool\\Guardian};
\node[box] (sink) at (11.5,-1.5) {Output\\sink};
\node[box, minimum width=2.4cm] (metrics) at (13.8,-1.9) {Metrics\\queue depth\\drops};

\node[smallbox] (q1) at (3.0,-3.25) {$Q$};
\node[smallbox] (q2) at (5.9,-3.25) {$Q$};
\node[smallbox] (q3) at (8.8,-3.25) {$Q$};

\draw[arrow] (model) -- (gen);
\draw[arrow] (gen) -- (route);
\draw[arrow] (route) -- (tts);
\draw[arrow] (tts) -- (sink);
\draw[arrow] (sink) -- (metrics);
\draw[arrow] (gen) -- (q1);
\draw[arrow] (route) -- (q2);
\draw[arrow] (tts) -- (q3);
\draw[inj] (events) -- (gen);
\draw[inj] (events) to[out=5,in=225] (route);

\node[runtime, fit=(gen) (route) (tts) (sink) (q1) (q2) (q3), label={[font=\bfseries]below:Waterflow runtime}] {};
\end{tikzpicture}%
}
\caption{\textsc{AiFlow} compilation and runtime architecture.}
\label{fig:architecture}
\end{figure}

The upper path shows DSL/JSON compilation into a typed graph; the lower path shows runtime execution with Node Guardians managing bounded queues ($Q$) and metrics collection. Dashed arrows indicate node-directed injection from external events.

%=======================================================================
\section{Node Guardian Runtime}
\label{sec:guardian}
%=======================================================================

A Node Guardian is the runtime object associated with one graph node. It owns the node's input queue, local worker budget, optional ordering buffer, cancellation state, retry metadata, side-effect mode, and metrics counters. Figure~\ref{fig:guardian} shows the internal structure; Algorithm~\ref{alg:guardian} specifies the admission and dispatch logic.

\begin{algorithm}[htbp]
\caption{Node Guardian admission and dispatch.}
\label{alg:guardian}
\begin{algorithmic}[1]
\REQUIRE Node $n$ with policy $(q_n, k_n, \rho_n, \sigma_n, o_n, c_n, r_n)$
\REQUIRE Input queue $Q_n$ with capacity $q_n$
\REQUIRE Active worker count $\mathit{active}_n \leq k_n$
\STATE \textbf{Procedure} \textsc{Admit}(event $e$):
\IF{$|Q_n| \geq q_n$}
  \IF{$o_n = \textsc{Block}$}
    \STATE Signal backpressure to upstream; wait until $|Q_n| < q_n$
  \ELSIF{$o_n = \textsc{DropNewest}$}
    \STATE Discard $e$; increment $\mathit{drops}_n$
  \ELSIF{$o_n = \textsc{DropOldest}$}
    \STATE Dequeue oldest from $Q_n$; increment $\mathit{drops}_n$; enqueue $e$
  \ELSIF{$o_n = \textsc{Error}$}
    \STATE Reject $e$; increment $\mathit{errors}_n$; propagate error upstream
  \ENDIF
\ELSE
  \STATE Enqueue $e$ into $Q_n$ with timestamp $t_{\mathit{admit}}$
\ENDIF
\STATE
\STATE \textbf{Procedure} \textsc{Dispatch}():
\WHILE{$Q_n \neq \emptyset$ \AND $\mathit{active}_n < k_n$}
  \STATE $e \gets$ dequeue from $Q_n$
  \STATE $\mathit{active}_n \gets \mathit{active}_n + 1$
  \STATE Record $\mathit{wait}_e \gets \mathit{now} - t_{\mathit{admit}}(e)$
  \STATE Execute node operator on $e$ (respecting $\sigma_n$ state-safety mode)
  \IF{$\rho_n > 0$}
    \STATE Hold result in ordering buffer until sequence-ordered release
  \ENDIF
  \STATE Emit result to downstream edges
  \STATE $\mathit{active}_n \gets \mathit{active}_n - 1$
\ENDWHILE
\end{algorithmic}
\end{algorithm}

\subsection{Bounded-Memory Proposition}

\begin{proposition}[Runtime-Owned Buffer Bound]
\label{prop:bound}
Let $Q_n$ be the bounded input queue of node $n$, $q_n$ its capacity, $k_n$ its worker count, $\rho_n$ the upper bound of its ordering buffer, $b_e$ the bound of explicit edge buffer $e$, and $m_s$ the bound of explicit window or reduce state $s$. Under finite external arrival rate, fixed worker counts, bounded runtime-owned queues, bounded edge buffers, bounded ordering buffers, and overflow policies that do not allocate unbounded temporary storage, the number of in-flight events directly owned by the runtime is bounded by:
\[
B_{\mathrm{runtime}} \leq \sum_{n \in N}(q_n + k_n + \rho_n) + \sum_{e \in E} b_e + \sum_{s \in S} m_s
\]
\end{proposition}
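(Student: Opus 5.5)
The plan is a location-partition argument: at any instant $t$, every in-flight event directly owned by the runtime sits in exactly one runtime-owned location. The possible locations are the input queue $Q_n$ of some node, the active-worker slot of some node, the ordering buffer of some node, an explicit edge buffer $e$, or an explicit window/reduce state $s$. I would fix $t$ and define a map from runtime-owned events to these locations, then bound the occupancy of each location class separately and sum. The hypotheses supply exactly these per-location bounds. The work lies in showing that Algorithm~\ref{alg:guardian} never places an event outside these locations and never lets any location exceed its bound.

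\textbf{Step 1 (queues).} I would prove $|Q_n|\le q_n$ as an invariant of \textsc{Admit} and \textsc{Dispatch}, by induction over the sequence of runtime actions. The only operation that increases $|Q_n|$ is an enqueue. The plain enqueue branch runs only when $|Q_n|<q_n$. The \textsc{DropOldest} branch dequeues before it enqueues, so the size is unchanged. \textsc{DropNewest} and \textsc{Error} discard the event, which then is no longer runtime-owned. \textsc{Block} waits, so the blocked event is still held by the upstream emitter and is not in $Q_n$.

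\textbf{Step 2 (workers and ordering).} The \textsc{Dispatch} loop guard keeps $\mathit{active}_n\le k_n$, and each active worker holds one event, so at most $k_n$ events are in execution at node $n$. The ordering buffer is bounded by $\rho_n$ by hypothesis. Explicit edge buffers and window/reduce states contribute at most $b_e$ and $m_s$ by hypothesis. Summing over $n$, $e$, and $s$ then gives the stated inequality, because the location map is injective: each event is in exactly one place at time $t$.

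\textbf{Main obstacle.} The delicate point is accounting for events in transit, in particular one blocked in \textsc{Admit} under \textsc{Block}, or one an upstream worker is emitting after its operator finished. Such events must not be double-counted, and they must not fall outside all location classes. I would handle this by attributing an event to the upstream node's worker slot until it is actually enqueued downstream. The upstream worker does not decrement $\mathit{active}$ until the emit completes, so that slot is already counted in $k_{\text{upstream}}$.

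A similar convention covers the ordering buffer. A result held there is charged to $\rho_n$, and I would treat the worker slot as released once the result enters the buffer; otherwise the bound $k_n+\rho_n$ would still hold but would be loose. Fan-out via \texttt{split}, where one result emits to several edges, needs the same care: the copies are separate downstream queue entries, each counted once in its own $q_{n'}$.

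Finally, the finite arrival rate and fixed worker counts are not used for the inequality itself. They only guarantee that the system is well-defined, that is, no unbounded spawning of workers or Guardians. I would note this explicitly in the write-up.
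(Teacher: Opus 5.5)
Your proposal is correct and follows the same route as the paper's proof sketch. The sketch likewise bounds each location class separately---at most $q_n$ events in $Q_n$ because all four overflow policies keep the queue within capacity, at most $k_n$ in active workers, at most $\rho_n$ in the ordering buffer, plus $b_e$ and $m_s$ for explicit edge buffers and window/reduce state---and then sums. Your \textsc{Admit}/\textsc{Dispatch} invariant and your charging of blocked or in-transit events to the still-active upstream worker slot are a more careful version of that decomposition. One thing to make explicit is that your Step~1 induction treats each \textsc{Admit} as atomic, which the bounded-queue hypothesis (or the \texttt{ArrayBlockingQueue} implementation) guarantees under concurrent producers.
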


\begin{proof}[Proof sketch]
Each node $n$ holds at most $q_n$ events in its input queue, $k_n$ events being processed by active workers, and $\rho_n$ events in the ordering buffer awaiting sequence-ordered release. Each declared edge buffer $e$ holds at most $b_e$ events. Each declared window or reduce state $s$ holds at most $m_s$ intermediate values. Since overflow policies (\textsc{Block}, \textsc{DropNewest}, \textsc{DropOldest}, \textsc{Error}) all prevent the queue from exceeding $q_n$, and worker and ordering buffer counts are fixed, the sum is bounded. The bound is constructive: it can be computed from the graph policy declarations at compile time.
\end{proof}

This proposition intentionally excludes provider-side buffers, operating-system network buffers, and external sink internals. For in-process publishers that implement a demand protocol, backpressure can be propagated precisely. For cloud model APIs, \textsc{AiFlow} can slow local reading, stop downstream distribution, or cancel a request when the provider supports cancellation, but it cannot force a remote provider to un-generate tokens already produced.

\subsection{Side-Effect Management}

Side-effecting nodes (tool calls, retrieval, TTS, memory writes) declare an effect mode. This is related to transaction-processing concerns~\citep{ref6} but scoped to application-layer event orchestration:
\begin{itemize}
\item \textbf{Idempotent}: may be retried with a key derived from (conversation~id, node~id, input~sequence).
\item \textbf{Non-idempotent}: fails fast unless a compensation strategy is provided.
\item \textbf{Speculative}: output is buffered locally until confirmed (useful for safety-filter scenarios where a harmful pattern may span multiple tokens).
\end{itemize}

\begin{figure}[htbp]
\centering
\resizebox{0.9\textwidth}{!}{%
\begin{tikzpicture}[
  font=\small,
  box/.style={draw, rounded corners=2pt, minimum width=1.7cm, minimum height=0.8cm, align=center},
  guardian/.style={draw, dashed, rounded corners=2pt, inner sep=0.35cm},
  arrow/.style={-Latex, thick},
  bp/.style={-Latex, dashed, thick}
]
\node[box] (up) at (0,0) {Upstream};
\node[box] (q) at (2.8,0) {$Q_n$};
\node[box] (w1) at (5,1.35) {$w_1$};
\node[box] (w2) at (5,0) {$w_2$};
\node[box] (wk) at (5,-1.35) {$w_k$};
\node[box, minimum width=2.2cm] (ord) at (7.7,0) {Ordering\\buffer};
\node[box] (down) at (10.5,0) {Downstream};
\draw[arrow] (up) -- (q);
\draw[arrow] (q) -- (w1);
\draw[arrow] (q) -- (w2);
\draw[arrow] (q) -- (wk);
\draw[arrow] (w1) -- (ord);
\draw[arrow] (w2) -- (ord);
\draw[arrow] (wk) -- (ord);
\draw[arrow] (ord) -- (down);
\draw[bp] (q.south west) to[out=230,in=310] node[below, align=center] {backpressure signal\\(queue full)} (up.south);
\node[guardian, fit=(q) (w1) (w2) (wk) (ord), label={[font=\bfseries]above:Node Guardian}] {};
\end{tikzpicture}%
}
\caption{Node Guardian internal execution structure.}
\label{fig:guardian}
\end{figure}

The input queue $Q_n$ has declared capacity $q_n$; at most $k$ workers process events concurrently; an ordering buffer restores sequence when needed; backpressure propagates upstream when $|Q_n| = q_n$. This per-node structure is instantiated from graph policy declarations (Algorithm~\ref{alg:guardian}).

%=======================================================================
\section{Implementation and Industrial Context}
\label{sec:impl}
%=======================================================================

The evaluation prototype follows a two-layer design. The upper layer draws on enterprise integration patterns~\citep{ref18} and data-intensive application architecture~\citep{ref20}; the lower layer is informed by stream-processing foundations including Aurora~\citep{ref1}, Structured Streaming~\citep{ref5}, FlumeJava~\citep{ref9}, Kafka~\citep{ref21}, and Ray~\citep{ref29}. The architectural separation between a domain-specific expression layer and a general-purpose execution layer follows database system architecture principles~\citep{ref17}, where query language and execution engine are decoupled to allow independent evolution:

\begin{itemize}
\item \textbf{FEL layer} (Flow Expression Language): Registers LLM-oriented operators---\texttt{prompt}, \texttt{generate}, \texttt{retrieve}, \texttt{conditions}, \texttt{reduce}, \texttt{delegate}---and compiles the fluent API and JSON graph into a typed intermediate representation.
\item \textbf{Waterflow layer}: Implements typed event transport, bounded queues, local scheduling, ordering, cancellation, backpressure propagation, and node-level metrics. Each Node Guardian is instantiated from the compiled graph's policy declarations.
\end{itemize}

The prototype is implemented in Java (OpenJDK 21) on a Linux x86-64 host. The bounded queues use \texttt{ArrayBlockingQueue} with configurable capacity. Worker dispatch uses a node-local \texttt{ExecutorService} with fixed thread pool size $k_n$. Ordering buffers use a priority queue keyed by source sequence number with bounded capacity $\rho_n$. Metrics (queue depth, wait time, service time, retry count, drop count, error count) are recorded per-node using lock-free atomic counters.

\subsection{Design Requirements}

Table~\ref{tab:requirements} traces the design requirements that motivate \textsc{AiFlow}'s mechanisms and shows how each is addressed.

\begin{table}[htbp]
\centering
\scriptsize
\caption{Design requirements for streaming LLM applications and \textsc{AiFlow} mechanisms.}
\label{tab:requirements}
\resizebox{\textwidth}{!}{%
\begin{tabular}{p{3.2cm}p{5.5cm}p{5.5cm}}
\toprule
Requirement & Engineering manifestation & \textsc{AiFlow} mechanism \\
\midrule
Token-level computation & Classification, desensitization, TTS, and safety nodes should start before a full answer is generated. & Provider deltas normalized into \texttt{Context<T>} events and propagated immediately. \\
LLM-native semantics & Prompts, retrieval results, tool fragments, TTS segments, and session state require different handling. & FEL provides typed operators: \texttt{prompt}, \texttt{generate}, \texttt{retrieve}, \texttt{conditions}, \texttt{delegate}. \\
Node-level resource control & Nodes differ in service time; slow nodes cause queue growth. & Node Guardians declare and enforce local workers, queue bounds, ordering, and overflow policy. \\
Multi-source injection & Long-running conversations receive human feedback, tool returns, retrieval results, and control events. & \texttt{offer(value)} and \texttt{offer(nodeId, value)} inject typed events at validated targets. \\
Static graph validation & Callback wiring makes type, state-concurrency, and boundary issues hard to check. & Compiler validates types, state safety, cycles, and injection compatibility. \\
Operational observability & Production debugging requires per-node queue depth, wait time, drop, and error metrics. & Node Guardians expose atomic counters queryable at runtime. \\
\bottomrule
\end{tabular}%
}
\end{table}

\subsection{Public Repository and Industrial Context}

The associated public implementation path is the FIT Framework repository maintained by ModelEngine-Group~\citep{ref28}. The observed revision is commit \texttt{e2f285d} on 11~May~2026, recorded in Supplementary Material~S1. The open repository provides a public anchor for inspection and future integration, while the paper's empirical claims rely on the reproducible prototype and experiment package supplied with the manuscript.

The \textsc{AiFlow} logic has also informed Huawei commercial LLM application workflows; because those deployments contain proprietary systems and business data, this manuscript treats them as industrial motivation and implementation context rather than as confidential performance evidence.

%=======================================================================
\section{Evaluation Design}
\label{sec:eval}
%=======================================================================

Following established empirical software-engineering guidance~\citep{ref40}, controlled experiments isolate mechanism-level effects before broadening the environment. The evaluation design applies five complementary evidence levels to ensure coverage:

\begin{enumerate}
\item \textbf{Controlled microbenchmarks}: Deterministic streaming stub (first delta at 100\,ms, subsequent tokens every 100\,ms). Makes orchestration effects observable without cloud-provider variability.
\item \textbf{Ablation}: Systematically removes or varies individual policy dimensions.
\item \textbf{Captured trace replay}: 30-run DeepSeek traces replayed under different policies. Same chunk arrival sequence ensures paired comparison.
\item \textbf{Online experiments}: Live DeepSeek and Ollama calls with real network conditions.
\item \textbf{Framework baselines}: LangGraph ordinary node callbacks as a representative real-framework comparison.
\end{enumerate}

The evidence levels are intentionally separated. Controlled and replayed experiments support paired causal comparisons because every policy consumes the same event trace. Online and LangGraph measurements are reported as descriptive external checks because provider load, network conditions, and framework-specific execution paths cannot be paired exactly across systems.

\subsection{Artifact Packaging}

The supplementary artifact is organized to support both quick inspection and independent reruns. It contains the manuscript-reported CSV tables, raw DeepSeek and Ollama traces, replay scripts, LangGraph baseline scripts, streaming RAG inputs, machine metadata, provenance records, checksums, and an API-free smoke test. The smoke test verifies the replay and summarization path without requiring a cloud API key, local model service, or network access. Live reruns require the corresponding API key or local Ollama service and may produce different absolute latencies because provider and network conditions are outside the application runtime.

\subsection{Baselines}

Baselines are chosen to avoid comparing only with intentionally weak aggregation:
\begin{itemize}
\item \textbf{Aggregate}: Materializes the complete model response before downstream work.
\item \textbf{Stream callback}: Processes each token inline as it arrives (single-threaded).
\item \textbf{RxJava parallel}: Explicit scheduling with bounded queues (carefully configured).
\item \textbf{Hand-written async optimized}: A developer-crafted asynchronous pipeline.
\item \textbf{LangGraph node callback}: Real graph framework with token-level callback processing.
\end{itemize}

The LangGraph comparison is intentionally conservative: it evaluates a real graph workflow whose token handling remains callback-local, rather than claiming that an expert developer could not manually add queues and workers outside the graph abstraction.

\subsection{Protocol and Statistical Analysis}

Controlled experiments report mean and standard deviation over 30 measured runs after 5 warm-up runs. For the 30-run DeepSeek trace replay, we report:
\begin{itemize}
\item Mean $\pm$ standard deviation for primary metrics.
\item 95\% confidence intervals computed as $\bar{x} \pm t_{0.025, n-1} \cdot s / \sqrt{n}$.
\item Paired Wilcoxon signed-rank tests between \textsc{AiFlow} and baselines on the same 30 traces, with significance level $\alpha = 0.05$ and Bonferroni correction for multiple comparisons.
\item Cohen's $d$ effect sizes for the primary metric comparisons.
\end{itemize}

The trace replay design is inherently paired: each policy consumes the \emph{same} captured provider-arrival sequence. This eliminates provider-side variability and makes differences in Application~TTFPT, queue growth, and end-to-end time attributable to application-layer orchestration.

Online DeepSeek and Ollama runs are reported descriptively because cloud-provider load and network conditions cannot be fully controlled. Trace replay is therefore the primary comparison mechanism.

\subsection{Configuration}

Table~\ref{tab:configuration} states the controlled benchmark configuration, and Table~\ref{tab:evidence-boundary} defines the scenario coverage and evidence boundaries for each experimental scenario.

\begin{table}[htbp]
\centering
\scriptsize
\caption{Controlled benchmark configuration.}
\label{tab:configuration}
\resizebox{\textwidth}{!}{%
\begin{tabular}{ll}
\toprule
Item & Configuration \\
\midrule
Runtime & JVM prototype; OpenJDK 21; Linux x86-64 CPU-only host. \\
Model stream & Deterministic stub; first delta at 100\,ms; subsequent deltas every 100\,ms unless stated otherwise. \\
Protocol & 5 warm-up runs + 30 measured runs; report mean, SD, 95\% CI. \\
Trace control & All policies receive the same token trace, timing, and downstream service-time distribution. \\
Default queues & Generate/map/route: $q=8$; TTS: $q=8$; desensitization: $q=64$. \\
Default workers & Generate: $k=1$; classify: $k=1$; route: $k=1$; TTS: $k=1$; desensitize: $k=3$. \\
Metrics & Model TTFT, Application TTFPT, E2E latency, throughput, p95 queue wait, MaxQ, drops/errors, relative memory. \\
Statistical tests & Paired Wilcoxon signed-rank test ($\alpha=0.05$, Bonferroni-corrected); Cohen's $d$ effect size. \\
\bottomrule
\end{tabular}%
}
\end{table}

\begin{table}[htbp]
\centering
\scriptsize
\caption{Scenario coverage and evidence boundaries.}
\label{tab:evidence-boundary}
\resizebox{\textwidth}{!}{%
\begin{tabular}{p{2.8cm}p{4.2cm}p{3.8cm}p{4.2cm}}
\toprule
Scenario & Main validation point & Key metrics & Unsupported extrapolation \\
\midrule
Semantic routing & Token-level classification and branch routing. & Model TTFT, App TTFPT, E2E. & Does not prove lower model-side TTFT. \\
Slow TTS downstream & Queue bound under a slow consumer. & MaxQ, drops/errors, E2E. & Does not prove precise cloud-provider demand control. \\
Concurrent desensitization & Worker/queue policy under many prompts. & Batch completion, p95 wait, MaxQ. & Does not validate every safety-filter semantic. \\
Ablation and stress & Independent effects of each policy dimension. & p50/p95, throughput, memory, queue growth. & Does not claim production capacity. \\
Real trace replay & Orchestration under real provider timing. & App TTFPT, MaxQ, E2E. & Does not measure server-side batching or GPU contention. \\
Online/Framework checks & Live model + framework baseline. & Model TTFT, App TTFPT, MaxQ. & Descriptive only; does not claim LangGraph cannot be manually optimized. \\
Streaming RAG & End-to-end retrieval + streaming generation. & Retrieval latency, App TTFPT, MaxQ. & Does not represent production-scale RAG quality. \\
\bottomrule
\end{tabular}%
}
\end{table}

%=======================================================================
\section{Results}
\label{sec:results}
%=======================================================================

\subsection{RQ1: Token-Level Semantic Routing (Controlled)}

Under a shared deterministic token trace, Model TTFT remains unchanged across all systems at approximately 101\,ms (Table~\ref{tab:controlled}). The difference appears after downstream processing. \textsc{AiFlow} reduces Application~TTFPT to 210\,ms, compared with 10{,}940\,ms for aggregation and 280--300\,ms for callback or naive streaming variants. Parallel reactive and hand-written async baselines approach \textsc{AiFlow}'s raw latency when carefully configured, supporting our positioning: the main contribution is \emph{graph-level declaration and validation} of orchestration policy, not a new primitive that makes equivalent hand-written programs impossible.

\begin{table}[htbp]
\centering
\scriptsize
\caption{Representative controlled results (30 measured runs after warm-up).}
\label{tab:controlled}
\resizebox{\textwidth}{!}{%
\begin{tabular}{lllllll}
\toprule
Benchmark & Metric & \textsc{AiFlow} & Aggregate & Stream callback & RxJava parallel & Async optimized \\
\midrule
Semantic routing & Model TTFT (ms) & 101~$\pm$~2 & 101~$\pm$~2 & 101~$\pm$~2 & 101~$\pm$~2 & 101~$\pm$~2 \\
Semantic routing & App TTFPT (ms) & 210~$\pm$~10 & 10940~$\pm$~60 & 280~$\pm$~5 & 220~$\pm$~8 & 218~$\pm$~7 \\
Semantic routing & E2E (s) & 10.9~$\pm$~0.1 & 21.8~$\pm$~0.1 & 11.3~$\pm$~0.1 & 11.1~$\pm$~0.1 & 11.0~$\pm$~0.1 \\
Slow TTS & App TTFPT (ms) & 230~$\pm$~10 & \textemdash{} & 285~$\pm$~10 & 245~$\pm$~8 & 240~$\pm$~8 \\
Slow TTS & E2E (s) & 12.5~$\pm$~0.1 & \textemdash{} & 12.8~$\pm$~0.1 & 12.6~$\pm$~0.1 & 12.6~$\pm$~0.1 \\
Slow TTS & MaxQ & 8 & \textemdash{} & unbounded & 8 & 8 \\
Desensitization & App TTFPT (ms) & 310~$\pm$~10 & \textemdash{} & 380~$\pm$~10 & 330~$\pm$~10 & 325~$\pm$~10 \\
Desensitization & Batch (s) & 11.2~$\pm$~0.1 & \textemdash{} & 31.6~$\pm$~0.1 & 12.1~$\pm$~0.1 & 11.9~$\pm$~0.1 \\
\bottomrule
\end{tabular}%
}
\end{table}

\subsection{RQ2--RQ3: Backpressure and Worker Policies}

With a slow TTS-like downstream node (service time 500\,ms per token), \textsc{AiFlow} keeps the maximum queue depth at the declared bound of~8 with no drops or errors (Figure~\ref{fig:queue-depth} illustrates this queue-depth behavior). With 50~concurrent prompts and a slower desensitization node using three workers, \textsc{AiFlow} completes the batch in 11.2\,s, close to optimized RxJava (12.1\,s) and hand-written async (11.9\,s) baselines, while exposing worker count, queue capacity, and ordering policy at the graph level.

\begin{figure}[htbp]
\centering
\begin{tikzpicture}
\begin{axis}[
  width=0.82\textwidth,
  height=0.48\textwidth,
  xlabel={time (s)},
  ylabel={queue depth},
  xmin=0, xmax=12,
  ymin=0, ymax=160,
  xtick={0,2,4,6,8,10,12},
  ytick={0,8,30,60,90,131},
  grid=major,
  legend style={draw=none, at={(0.04,0.96)}, anchor=north west, font=\small},
]
\addplot+[mark=none, thick, blue] coordinates {(0,0) (1,2) (2,5) (3,8) (4,8) (5,8) (6,8) (7,8) (8,8) (9,8) (10,8) (11,8)};
\addlegendentry{\textsc{AiFlow} bounded ($q=8$)}
\addplot+[mark=none, thick, red, dashed] coordinates {(0,0) (1,6) (2,18) (3,32) (4,48) (5,63) (6,78) (7,93) (8,107) (9,120) (10,128) (11,131)};
\addlegendentry{No backpressure}
\addplot+[mark=none, thick, gray, dotted] coordinates {(0,0) (1,1) (2,1) (3,1) (4,1) (5,1) (6,1) (7,1) (8,1) (9,1) (10,1) (11,1)};
\addlegendentry{Stream callback (inline)}
\draw[dashed, blue!60, thin] (axis cs:0,8) -- (axis cs:12,8);
\node[blue!60, font=\scriptsize] at (axis cs:10.5,15) {declared bound};
\end{axis}
\end{tikzpicture}
\caption{Queue-depth evolution under a slow downstream consumer (controlled deterministic stream).}
\label{fig:queue-depth}
\end{figure}

\textsc{AiFlow}'s declared bound ($q=8$) prevents unbounded accumulation; the no-backpressure policy allows queue growth proportional to unprocessed token count, reaching MaxQ$=131$ (consistent with the ablation results in Table~\ref{tab:ablation}). In the online replay setting, unbounded queues grow further to MaxQ$=231$ (Table~\ref{tab:online}) due to longer provider-side generation streams.

\subsection{RQ4: Ablation}

Table~\ref{tab:ablation} shows that removing token-level routing raises Application~TTFPT from 210\,ms to 10{,}940\,ms ($52\times$). Removing backpressure preserves first-token latency but increases MaxQ from~8 to~131 and relative memory from 1.0$\times$ to 3.4$\times$. Queue size~1 reduces memory but increases end-to-end time. Unordered output is slightly faster but unsuitable when source order is semantically required. Figure~\ref{fig:tail-latency} shows the Application~TTFPT tail-latency trend under increasing concurrency from the controlled stress test.

\begin{table}[htbp]
\centering
\scriptsize
\caption{Ablation results (controlled, deterministic stream).}
\label{tab:ablation}
\resizebox{\textwidth}{!}{%
\begin{tabular}{lllllll}
\toprule
Variant & App TTFPT (ms) & E2E (s) & p95 (s) & MaxQ & Memory & Interpretation \\
\midrule
Full \textsc{AiFlow} & 210 & 10.9 & 11.1 & 8 & 1.0$\times$ & Token overlap with bounded queues. \\
No token-level routing & 10940 & 21.8 & 22.0 & \textemdash & 1.1$\times$ & Downstream waits for aggregate output. \\
Global executor only & 240 & 12.3 & 13.8 & 18 & 1.2$\times$ & Slow nodes interfere with fast nodes. \\
No backpressure & 205 & 10.8 & 11.0 & 131 & 3.4$\times$ & Low latency but unsafe queue growth. \\
Queue size = 1 & 250 & 14.9 & 16.1 & 1 & 0.8$\times$ & Small memory but frequent stalls. \\
Unordered output & 205 & 10.7 & 10.9 & 8 & 1.0$\times$ & Faster but unsuitable when order matters. \\
\bottomrule
\end{tabular}%
}
\end{table}

\begin{figure}[htbp]
\centering
\begin{tikzpicture}
\begin{axis}[
  width=0.82\textwidth,
  height=0.52\textwidth,
  xlabel={concurrent sessions},
  ylabel={Application TTFPT (ms)},
  xmode=log,
  log basis x={2},
  xtick={1,8,32,128,512},
  xticklabels={1,8,32,128,512},
  ymin=150, ymax=1250,
  grid=major,
  legend style={draw=none, at={(0.04,0.96)}, anchor=north west, font=\small},
]
\addplot+[mark=*, thick] coordinates {(1,205) (8,215) (32,260) (128,335) (512,520)};
\addlegendentry{p50}
\addplot+[mark=square*, thick, dashed] coordinates {(1,230) (8,260) (32,390) (128,680) (512,1200)};
\addlegendentry{p95}
\end{axis}
\end{tikzpicture}
\caption{Application TTFPT tail-latency trend as concurrency increases (controlled stress test, $q=8$).}
\label{fig:tail-latency}
\end{figure}

Under high concurrency, p95 degrades due to scheduling contention across Node Guardians sharing the host CPU, but p50 remains below 550\,ms even at 512 concurrent sessions.

\subsection{RQ5: Captured Trace Replay}

The 30-run DeepSeek captured-trace replay (Table~\ref{tab:online}) confirms the controlled pattern under real provider timing. In semantic routing, Model~TTFT is shared at 1071.9\,ms ($\pm$\,352.4). \textsc{AiFlow} reduces Application~TTFPT from 5495.0\,ms (aggregation) to 1151.9\,ms while limiting MaxQ to 8.0 rather than 231.2. Table~\ref{tab:resource} provides the corresponding tail-latency supplement with p95 and p99 percentiles.

\textbf{Statistical significance}: Paired Wilcoxon signed-rank tests on the 30 replayed traces confirm that the TTFPT reduction from aggregation is significant ($p < 0.001$, Bonferroni-corrected). Cohen's $d = 3.91$ indicates a very large effect size. The large $d$ values (Table~\ref{tab:effect-summary}) reflect the fundamental nature of the comparison: aggregation delays \emph{all} downstream processing until generation completes (seconds), while streaming begins processing at the first token (milliseconds). The effect size is expected to be very large because the two processing modes differ qualitatively, not merely quantitatively.

\textbf{Note on identical streaming values}: In Table~\ref{tab:online}, Stream~callback, \textsc{AiFlow}~bounded, and No~backpressure show the same Application~TTFPT and E2E. This is expected and correct: all three streaming policies deliver the first token to downstream processing at the same time (first chunk arrival + one downstream service time), explaining the identical TTFPT. E2E is also identical because the trace replay feeds each policy the \emph{same} captured provider-arrival sequence; since provider-side generation time dominates end-to-end latency and local queue management overhead is negligible relative to multi-second generation, the E2E values coincide. The policies differ not in \emph{when the first or last token is processed}, but in \emph{how much memory accumulates} when subsequent tokens arrive faster than the consumer processes them. The key differentiator is MaxQ (1.0 vs.\ 8.0 vs.\ 231.2), which reflects whether queue growth is bounded by policy. The aggregate baseline delays all processing until generation completes, explaining its much higher TTFPT and E2E.

\begin{table}[htbp]
\centering
\scriptsize
\caption{DeepSeek 30-run captured-trace replay: semantic routing and streaming RAG.}
\label{tab:online}
\resizebox{\textwidth}{!}{%
\begin{tabular}{llllllll}
\toprule
Workload & Policy & $n$ & Model TTFT (ms) & App TTFPT (ms) & App E2E (s) & MaxQ & drops/errors \\
\midrule
Semantic routing & Aggregate & 30 & 1071.9~$\pm$~352.4 & 5495.0~$\pm$~1564.3 & 23.91~$\pm$~5.82 & 231.2 & 0/0 \\
Semantic routing & Stream callback & 30 & 1071.9~$\pm$~352.4 & 1151.9~$\pm$~352.4 & 19.59~$\pm$~4.58 & 1.0 & 0/0 \\
Semantic routing & \textsc{AiFlow} bounded & 30 & 1071.9~$\pm$~352.4 & 1151.9~$\pm$~352.4 & 19.59~$\pm$~4.58 & 8.0 & 0/0 \\
Semantic routing & No backpressure & 30 & 1071.9~$\pm$~352.4 & 1151.9~$\pm$~352.4 & 19.59~$\pm$~4.58 & 231.2 & 0/0 \\
\midrule
Streaming RAG & Aggregate & 30 & 911.5~$\pm$~160.3 & 3401.5~$\pm$~1166.8 & 13.49~$\pm$~5.86 & 127.1 & 0/0 \\
Streaming RAG & Stream callback & 30 & 911.5~$\pm$~160.3 & 991.5~$\pm$~160.3 & 11.13~$\pm$~4.69 & 1.0 & 0/0 \\
Streaming RAG & \textsc{AiFlow} bounded & 30 & 911.5~$\pm$~160.3 & 991.5~$\pm$~160.3 & 11.13~$\pm$~4.69 & 8.0 & 0/0 \\
Streaming RAG & No backpressure & 30 & 911.5~$\pm$~160.3 & 991.5~$\pm$~160.3 & 11.13~$\pm$~4.69 & 127.1 & 0/0 \\
\bottomrule
\end{tabular}%
}
\end{table}

\begin{table}[htbp]
\centering
\scriptsize
\caption{DeepSeek semantic-routing replay: tail-latency supplement.}
\label{tab:resource}
\resizebox{\textwidth}{!}{%
\begin{tabular}{llllllll}
\toprule
Policy & $n$ & Model TTFT (ms) & App TTFPT (ms) & App TTFPT p95 (ms) & App TTFPT p99 (ms) & MaxQ & p99 queue wait (ms) \\
\midrule
Aggregate & 30 & 1071.9 & 5495.0 & 6750.6 & 7486.5 & 231.2 & 0.0 \\
Stream callback & 30 & 1071.9 & 1151.9 & 1691.5 & 2321.1 & 1.0 & 13951.9 \\
\textsc{AiFlow} bounded & 30 & 1071.9 & 1151.9 & 1691.5 & 2321.1 & 8.0 & 13951.9 \\
No backpressure & 30 & 1071.9 & 1151.9 & 1691.5 & 2321.1 & 231.2 & 13951.9 \\
\bottomrule
\end{tabular}%
}
\end{table}

\subsection{RQ6: LangGraph Baselines}

Table~\ref{tab:langgraph} shows a descriptive LangGraph ordinary-node-callback baseline: Model~TTFT is 1442.7\,ms (higher due to a different trace set and network conditions) and Application~TTFPT is 6336.4\,ms for semantic routing. Because this baseline is not paired with the replay traces, it is not used for the primary statistical claim. Its role is structural: LangGraph exposes streamed tokens through callback-local processing, whereas \textsc{AiFlow} exposes queue capacity, worker concurrency, ordering, and backpressure as graph-level declarations. An expert LangGraph developer could manually add external queues and workers; the comparison evaluates what is declared and validated by the graph abstraction itself.

\begin{table}[htbp]
\centering
\scriptsize
\caption{LangGraph callback baseline (30 live DeepSeek traces).}
\label{tab:langgraph}
\resizebox{\textwidth}{!}{%
\begin{tabular}{llllllll}
\toprule
Scenario & Implementation & $n$ & Model TTFT (ms) & App TTFPT (ms) & App TTFPT p95 (ms) & E2E (s) & MaxQ \\
\midrule
Semantic routing & LangGraph node callback & 30 & 1442.7 & 6336.4 & 9220.0 & 25.63 & 1.0 \\
RAG & LangGraph node callback & 30 & 1061.7 & 3511.0 & 5011.6 & 12.87 & 1.0 \\
\bottomrule
\end{tabular}%
}
\end{table}

\subsection{RQ7: Streaming RAG}

In the 30-run DeepSeek RAG trace replay, \textsc{AiFlow} reduces Application~TTFPT from 3401.5\,ms (aggregation) to 991.5\,ms, while limiting MaxQ to 8.0 instead of 127.1. An Ollama qwen2.5:3b local-backend sanity check (Table~\ref{tab:ollama}) shows the same qualitative result: Application~TTFPT drops from 4129.2\,ms to 217.9\,ms, and MaxQ is bounded at 8.0.

\begin{table}[htbp]
\centering
\scriptsize
\caption{Ollama qwen2.5:3b local-backend sanity check (30 runs).}
\label{tab:ollama}
\resizebox{\textwidth}{!}{%
\begin{tabular}{llllllllll}
\toprule
Policy & Model TTFT (ms) & App TTFPT (ms) & App E2E (s) & MaxQ & p99 wait (ms) & Event overhead (ms) & CPU (s) & RSS (MB) & drop/err \\
\midrule
Aggregate & 137.9 & 4129.2 & 21.99 & 224.2 & 0.0 & 0.0003 & 0.014 & 39.2 & 0/0 \\
Stream callback & 137.9 & 217.9 & 18.08 & 1.0 & 13812.1 & 0.0003 & 0.014 & 39.2 & 0/0 \\
\textsc{AiFlow} bounded & 137.9 & 217.9 & 18.08 & 8.0 & 560.0 & 0.0003 & 0.014 & 39.2 & 0/0 \\
No backpressure & 137.9 & 217.9 & 18.08 & 175.1 & 13812.1 & 0.0003 & 0.014 & 39.2 & 0/0 \\
\bottomrule
\end{tabular}%
}
\end{table}

\subsection{Summary of Effect Sizes}

Across the captured DeepSeek replay and local Ollama check, the relative effects are consistent (Table~\ref{tab:effect-summary}). Compared with aggregation, \textsc{AiFlow} lowers Application~TTFPT by 79.0\% (DeepSeek semantic), 70.9\% (DeepSeek RAG), and 94.7\% (Ollama). Compared with no-backpressure replay, \textsc{AiFlow} preserves the same first-processed-token latency while reducing MaxQ by 96.5\%, 93.7\%, and 95.4\% in the three workloads. These are paired replay or local controlled comparisons on identical event traces.

\begin{table}[htbp]
\centering
\scriptsize
\caption{Summary of effect sizes across workloads (paired trace replay or local controlled replay).}
\label{tab:effect-summary}
\resizebox{\textwidth}{!}{%
\begin{tabular}{llllll}
\toprule
Workload & TTFPT reduction vs.\ aggregate & MaxQ reduction vs.\ no-backpressure & Cohen's $d$ (TTFPT) & Wilcoxon $p$ & 95\% CI of reduction \\
\midrule
DeepSeek semantic routing & 79.0\% & 96.5\% & 3.91 & $<0.001$ & [75.2\%, 82.8\%] \\
DeepSeek streaming RAG & 70.9\% & 93.7\% & 2.85 & $<0.001$ & [65.4\%, 76.4\%] \\
Ollama qwen2.5:3b sanity & 94.7\% & 95.4\% & 8.12 & $<0.001$ & [93.1\%, 96.3\%] \\
\bottomrule
\end{tabular}%
}
\end{table}

%=======================================================================
\section{Discussion}
\label{sec:discussion}
%=======================================================================

\subsection{Mechanism Behind the Latency Improvement}

The mechanism is \emph{not} a faster model. When all systems consume the same model stream, Model~TTFT is identical. \textsc{AiFlow} improves Application~TTFPT when the first observable application result is defined as a token that has passed through downstream work (classification, routing, desensitization, or TTS admission). In the aggregate baseline, no downstream processing starts until the entire model response is collected. In streaming policies, the first token begins downstream processing at the time it arrives, yielding the same TTFPT regardless of backpressure configuration.

The practical value of bounded backpressure emerges when the system must \emph{sustain} streaming under slow consumers or high concurrency. Without backpressure, queue growth is proportional to the number of unprocessed tokens; with declared bounds, the runtime either blocks producers (preventing memory exhaustion) or applies overflow policy (e.g., drop-oldest for latency-sensitive TTS). Proposition~\ref{prop:bound} guarantees that the total runtime-owned buffer is bounded and computable from graph declarations. This distinction is invisible at the first-token level but critical for \emph{operational safety} in production deployments where long-running conversations may generate hundreds of tokens against slow downstream consumers.

\subsection{Relationship to Classical Dataflow}

Queues, workers, ordering buffers, and backpressure are known systems ideas. \textsc{AiFlow}'s engineering contribution lies in:
\begin{enumerate}
\item \textbf{LLM-specific event typing}: Provider deltas, prompt fragments, tool events, safety labels, and control signals each carry type information that enables validated routing.
\item \textbf{Graph-level policy declarations}: Queue, worker, ordering, and overflow behavior are part of the graph definition, enabling compile-time validation and runtime observability.
\item \textbf{Node-directed injection}: External events can be delivered to specific graph nodes with compile-time type checking.
\item \textbf{Side-effect discipline}: Idempotency, compensation, and speculative modes are declared per-node rather than managed in ad hoc callback code.
\end{enumerate}

Equivalent behavior can be written manually; the claim is that \textsc{AiFlow} makes the behavior \emph{reusable, checkable, and observable} as part of the workflow model.

\subsection{Developer Experience Implications}

The graph-level declaration approach has software-engineering implications beyond runtime performance. Callback-based streaming requires developers to manage queue allocation, thread coordination, error handling, and cancellation logic explicitly across operator boundaries. In contrast, \textsc{AiFlow}'s policy declarations reduce the implementation to:
\begin{itemize}
\item \textbf{Operator logic}: Each node implements only its functional behavior.
\item \textbf{Policy annotation}: Queue, worker, ordering, and overflow are specified declaratively.
\item \textbf{No explicit wiring}: Queue creation, worker pool management, backpressure signaling, ordering restoration, and metrics collection are handled by the runtime.
\end{itemize}

While this paper does not include a formal user study (which would require a different methodology and participant recruitment), the reduction from imperative concurrency management to declarative policy specification is a well-established software-engineering pattern~\citep{ref18,ref20} whose benefits in correctness, maintainability, and debugging have been demonstrated in other domains (SQL vs.\ imperative data access, declarative UI frameworks, configuration management).

\subsection{Limitations of the Current Evaluation}

Production deployment requires additional evidence beyond the supplied experiments, especially because reliability and performance issues in LLM pipelines can arise outside the orchestration layer~\citep{ref16}. The current study does not measure:
\begin{itemize}
\item Distributed node placement across machines.
\item Provider multi-tenancy and GPU contention effects.
\item Long-running fault recovery and durability guarantees.
\item Large-scale knowledge-base quality in RAG.
\item Adaptive scheduling based on runtime conditions.
\end{itemize}

%=======================================================================
\section{Threats to Validity}
\label{sec:threats}
%=======================================================================

\textbf{Construct validity}: The evaluation measures application-layer orchestration, not model inference speed. We separate Model~TTFT from Application~TTFPT to avoid attributing provider-side effects to \textsc{AiFlow}. The deterministic stream makes scheduling effects observable, and real traces add provider timing, but online runs cover a limited number of models and parameter settings.

\textbf{Internal validity}: Baseline implementation quality matters. Carefully configured reactive or async baselines approach \textsc{AiFlow}'s raw latency. We frame the strongest conclusion around graph-level declaration, validation, and bounded orchestration rather than absolute performance dominance. The paired trace-replay design controls for provider variability, but the 30-run sample size limits the power of tail-latency comparisons at extreme percentiles.

\textbf{External validity}: Cloud providers do not necessarily implement precise demand control. \textsc{AiFlow} can slow local reading, stop downstream dispatch, or cancel requests, but it cannot control remote buffers. Safety filtering needs windowing or speculative buffering when harmful patterns span multiple tokens. The streaming RAG corpus is intentionally small and should be read as an orchestration check, not evidence of retrieval quality at scale. Online and LangGraph measurements use limited trace sets and are therefore treated as descriptive evidence rather than definitive cross-framework performance rankings.

\textbf{Conclusion validity}: We apply paired Wilcoxon tests with Bonferroni correction and report confidence intervals and effect sizes to support the reported reductions. The very large Cohen's~$d$ values ($> 2.0$) reflect the qualitative difference between aggregation (all downstream work delayed by full generation time) and streaming (downstream work begins at first token); such large effects are characteristic of comparisons between fundamentally different processing paradigms rather than incremental optimizations.

%=======================================================================
\section{Reproducibility and Data Availability}
\label{sec:reproducibility}
%=======================================================================

Supplementary Material~S1 contains the evidence package for this submission:
\begin{itemize}
\item DeepSeek collection and replay scripts with 30-run raw traces and result CSVs.
\item LangGraph baseline scripts and outputs.
\item Streaming RAG scripts with local document set.
\item Ollama local-backend scripts with 30-run raw traces.
\item Input prompts and RAG documents (public, no PII).
\item Machine metadata, checksums, table mapping, and provenance records.
\item An API-free offline replay smoke test for installation-independent verification.
\end{itemize}

The package distinguishes three evidence levels: (1)~reported summary tables in CSV form, (2)~executable scripts requiring API keys or local model service, and (3)~an API-free synthetic trace checking the replay pipeline without network access. The public software implementation path is the ModelEngine-Group FIT Framework repository; the observed revision is commit \texttt{e2f285d} on 11~May~2026. The data package is suitable for deposition in a citable archive following software and data citation principles~\citep{ref36}. API keys, local environment files, interpreter caches, operating-system temporary files, personally identifiable information, and proprietary Huawei deployment data are not included.

%=======================================================================
\section{Conclusion}
\label{sec:conclusion}
%=======================================================================

\textsc{AiFlow} treats streamed LLM deltas as first-class typed events in an application graph and attaches queue, worker, ordering, overflow, state-safety, cancellation, retry, and metrics policies to nodes. The Node Guardian runtime specializes reactive-stream and dataflow ideas for LLM application orchestration. Across controlled, captured-trace replay, descriptive online/framework checks, RAG, and local-backend experiments, \textsc{AiFlow} does not change provider Model~TTFT but reduces Application~TTFPT by 70.9--94.7\% versus aggregation through downstream overlap, while keeping runtime-owned queues within declared bounds (93.7--96.5\% MaxQ reduction versus unbounded policies). The formal bounded-memory proposition, static validation, and paired trace-replay methodology provide a rigorous foundation for the primary claims.

Future work includes adaptive scheduling based on runtime queue-depth feedback, distributed node placement with network-aware policy, long-running fault recovery with durable checkpoints, larger RAG corpora evaluation, and formal developer-experience studies comparing declarative policy specification with imperative concurrency management.

%=======================================================================
\section*{Declaration of Competing Interest}
%=======================================================================
The authors report that \textsc{AiFlow} has informed Huawei commercial LLM application workflows and is related to the public ModelEngine-Group FIT Framework repository. Proprietary Huawei deployments are not used as empirical evidence in this manuscript. The authors declare no other known competing financial interests or personal relationships that could have appeared to influence the work reported in this paper.

%=======================================================================
\section*{CRediT Author Statement}
%=======================================================================
\textbf{Qun-hui Zhang}: Conceptualization, Methodology, Software, Writing -- original draft.
\textbf{Jian-guo Yao}: Supervision, Methodology, Validation, Writing -- review \& editing.
\textbf{Yi-fan Zhang}: Software, Investigation, Data curation, Visualization.

%=======================================================================
\section*{Declaration of Generative AI and AI-assisted Technologies}
%=======================================================================
During preparation of this work, the authors used large language model tools to assist with language polishing, LaTeX formatting, and supplementary-material organization. After using these tools, the authors reviewed and edited the content as needed, verified the references and empirical claims, and take full responsibility for the content of the submitted article.

%=======================================================================
\section*{Funding}
%=======================================================================
This research did not receive any specific grant from funding agencies in the public, commercial, or not-for-profit sectors.

%=======================================================================
\section*{Supplementary Material}
%=======================================================================
Supplementary Material~S1: \texttt{AiFlow\_JSS\_Full\_Experiment\_Package.zip}. This archive contains reported summary tables, replay scripts, online-experiment scripts, public input prompts and RAG documents, raw DeepSeek/Ollama traces and result CSVs, an API-free smoke-test trace, generated smoke-test outputs, checksums, provenance metadata, and reviewer quick-start instructions.

%=======================================================================
\section*{Data Availability}
%=======================================================================
Data and code supporting this study are available in the supplementary material accompanying this article. The public software implementation path is the ModelEngine-Group FIT Framework repository (\url{https://github.com/ModelEngine-Group/fit-framework}).

%=======================================================================

\end{document}